\setlist[description]{leftmargin=*}
\theoremstyle{plain}
\newtheorem{thm}{Theorem}[section]
\newtheorem{lem}{Lemma}[section]
\newtheorem{cor}{Corollary}[section]
\theoremstyle{remark}
\newtheorem{rem}{Remark}[section]
\newcommand{\set}[1]{\left\{#1\right\}}
\newcommand{\RR}{\mathbb{R}}
\newcommand{\NN}{\mathbb{N}}
\newcommand{\Lor}{\mathbb{R}^{1,d}}
\newcommand{\Rplus}{\mathbb{R}_{\geqslant 0}}
\newcommand{\ee}{\bm{e}}
\newcommand{\oo}{\bm{o}}
\newcommand{\qq}{\bm{q}}
\newcommand{\xx}{\bm{x}}
\newcommand{\yy}{\bm{y}}
\newcommand{\dist}{\mathrm{d}}
\newcommand{\cO}{\mathcal{O}}
\newcommand{\cH}{\mathcal{H}}
\newcommand{\norm}[1]{\left\Vert#1\right\Vert}
\newcommand{\diag}{\operatorname{diag}}
\newcommand{\argmin}{\operatorname{arg\,min}}
\newcommand{\strain}{\operatorname{Strain}}
\newcommand{\stress}{\operatorname{Stress}}
\DeclareMathOperator{\arcosh}{arcosh}
\newcommand\ackname{Acknowledgements}
  \newenvironment{acknowledgements}{%
      \titlepage
      \null\vfil
      \@beginparpenalty\@lowpenalty
      \begin{center}%
        \bfseries \ackname
        \@endparpenalty\@M
      \end{center}}%
     {\par\vfil\null\endtitlepage}
  \newenvironment{acknowledgements}{%
      \if@twocolumn
        \section*{\abstractname}%
      \else
        \small
        \begin{center}%
          {\bfseries \ackname\vspace{-.5em}\vspace{\z@}}%
        \end{center}%
        \quotation
      \fi}
      {\if@twocolumn\else\endquotation\fi}
\begin{document}
\pagestyle{empty}

\title{Strain-Minimizing Hyperbolic Network Embeddings with Landmarks}
\author[1,*]{Martin Keller-Ressel}
\author[1,$\dagger$]{Stephanie Nargang}
\affil[1]{TU Dresden, Institute for Mathematical Stochastics, Dresden, 01062, Germany}
\affil[*]{martin.keller-ressel@tu-dresden.de}
\affil[$\dagger$]{stephanie.nargang@tu-dresden.de}

\maketitle

\begin{abstract}
{We introduce L-hydra (landmarked hyperbolic distance recovery and approximation), a method for embedding network- or distance-based data into hyperbolic space, which requires only the distance measurements to a few `landmark nodes'. This landmark heuristic makes L-hydra applicable to large-scale graphs and improves upon previously introduced methods. As a mathematical justification, we show that a point configuration in $d$-dimensional hyperbolic space can be perfectly recovered (up to isometry) from distance measurements to just $d+1$ landmarks. We also show that L-hydra solves a two-stage strain-minimization problem, similar to our previous (unlandmarked) method `hydra'. Testing on real network data, we show that L-hydra is an order of magnitude faster than existing hyperbolic embedding methods and scales linearly in the number of nodes. While the embedding error of L-hydra is higher than the error of existing methods, we introduce an extension, L-hydra+, which outperforms existing methods in both runtime and embedding quality.}
%\keywords{network embedding, hyperbolic embedding, dimensionality reduction, hyperbolic geometry, landmark embedding}
\end{abstract}

\begin{acknowledgements} We would like to thank Kenny Chowdhary and Tamara Kolda for providing us with the source code of the \texttt{HyPy} algorithm from \cite{chowdhary2017improved}.
\end{acknowledgements}

\section{Introduction}
Embeddings of networks and distance-based data into hyperbolic geometry have received substantial attention in the recent decade. Such embeddings have been used for link prediction \cite{papadopoulos2012popularity, papadopoulos2015network}, visualization \cite{walter2004h}, and community detection \cite{papadopoulos2015network, muscoloni2017machine} in networks. In addition to providing insight into the tradeoff between popularity and similarity effects in network growth \cite{papadopoulos2012popularity}, they have interesting implications for routing, network navigability \cite{kleinberg2007geographic, boguna2009navigability} and efficient computation of shortest paths \cite{zhao2011fast, chowdhary2017improved}. In \cite{keller2019hydra} we have introduced \texttt{hydra} (hyperbolic distance recovery and approximation) and \texttt{hydra+} as efficient methods to compute such embeddings. Similar to classic multidimensional scaling in Euclidean space, \texttt{hydra} uses an Eigendecomposition technique to minimize \emph{strain}, a functional based on the hyperbolic Gram matrix of embedded points. Due to this technique, \texttt{hydra} provides an efficient alternative to methods like \texttt{Rigel} of \cite{zhao2011fast} and \texttt{HyPy} of \cite{chowdhary2017improved}, which minimize \emph{stress}, i.e., the least-squares embedding error, by numerical optimization. However, as \texttt{hydra} requires the full shortest path matrix of the input network, it can not be scaled to large and very large networks beyond the order of 100.000 nodes. \texttt{Rigel} and \texttt{HyPy} on the other side use a clever landmark heuristic (see also \cite{de2004sparse}), which replaces the full shortest path matrix by only the shortest paths terminating at a small number of (randomly selected) landmark nodes. Here, we show that this landmark heuristic can also be applied to the method of strain minimization, and introduce \texttt{L-hydra} (Landmarked hyperbolic distance recovery and approximation) as a new hyperbolic embedding method for very large networks. On the theoretical side, we show that \texttt{L-hydra} perfectly recovers (up to isometry) any $n$-point configuration in $d$-dimensional hyperbolic space from its distances measured to just $d+1$ landmarks. This result provides, for the first time, a theoretical basis to the use of landmark methods for hyperbolic network embeddings. In particular it shows that the required number of landmarks depends only on the intrinsic dimension $d$ of the network, but not on the number $n$ of vertices; hence the computational load of \texttt{L-hydra} scales only \emph{linearly} with the number $n$ of vertices. Finally, we present in Section~\ref{sec:numerics} numerical results for \texttt{L-hydra} and its extension \texttt{L-hydra+}, showing that they provide an improvement over \texttt{HyPy} in both runtime and embedding error.

\section{Background}

\subsection{Hyperbolic space}
Our embedding method is formulated in the mathematical framework of the $d$-dimensional hyperboloid model of hyperbolic geometry (cf. \cite{ratcliffe2006foundations, cannon1997hyperbolic}). For $\xx, \yy \in \RR^{d+1}$ the \emph{Lorentz product}, an indefinite inner product, is defined by
\begin{equation}\label{eq:lorentz}
\xx \circ \yy := x_1 y_1 - \left(x_2 y_2 + \dotsc + x_{d+1} y_{d+1}\right).
\end{equation}
The real vector space $\RR^{d+1}$ equipped with this inner product is called \emph{Lorentz space} and denoted by $\Lor$. It contains, as subset, the \emph{positive Lorentz space} $\Lor_+ = \set{\xx \in \Lor: x_1 > 0}$.  Within $\Lor_+$, the single-sheet hyperboloid $\cH_d$ is given by 
\begin{equation}\label{eq:Hd}
\cH_d = \set{\xx \in \Lor: \xx \circ \xx = 1, x_1 > 0}.
\end{equation}
The \emph{hyperboloid model} in dimension $d$ with curvature $-\kappa$, $(\kappa >0)$, consists of $\cH_d$ endowed with the hyperbolic distance
\begin{equation}\label{eq:hyper_dist}
\dist^\kappa_H(\xx,\yy) = \frac{1}{\sqrt{\kappa}}\arcosh\left(\xx \circ \yy\right), \qquad \xx, \yy \in \cH_d.
\end{equation}
The hyperbolic distance $\dist^\kappa_H$ is a distance on $\cH_d$ in the usual mathematical sense; in particular it takes only positive values and satisfies the triangle inequality, cf. \cite[\S3.2]{ratcliffe2006foundations}. In fact, equipped with the metric tensor $ds^2 = \frac{1}{\kappa}(d\xx \circ d\xx)$, the hyperboloid $\cH_d$ becomes a Riemannian manifold of constant sectional curvature $-\kappa$ and $\dist^\kappa_H$ is exactly its geodesic distance. Note that the curvature parameter does not enter into the description of the hyperboloid $\cH_d$, but only in the distance metric $\dist^\kappa_H$. Just as Euclidean space is the canonical model of geometry with zero curvature, hyperbolic space is the canonical model of geometry with negative curvature. Aside from the hyperbolid model, other equivalent models of hyperbolic geometry exist, including the Poincar\'e ball model and the upper half-space model, see \cite[\S4.2, \S4.6]{ratcliffe2006foundations} and \cite{cannon1997hyperbolic}.

\subsection{Embedding of distances and networks}
To formulate the problem of embedding network or other data into hyperbolic space, let some objects $\oo_1, \dotsc, \oo_n$ be given. Let $D = [d_{ij}] \in \Rplus^{n \times n}$ be a symmetric matrix with zero diagonal, which represents the induced pairwise dissimilarities between the objects. The goal is to find a low dimensional coordinate representation $\xx_1, \dotsc, \xx_n$ of the objects in hyperbolic space $\cH_d$, such that the hyperbolic distances between the coordinate representations approximate the given dissimilarities as closely as possible, i.e., such that
\begin{equation}\label{eq:embed}
\dist^\kappa_H(\xx_i, \xx_j) \approx d_{ij}.
\end{equation}
In Euclidean space, such approximations are well studied and can be calculated e.g. by multidimensional scaling (MDS), see also \cite{borg2005modern}. A survey of the hyperbolic setting, discussing in particular the properties of hyperbolic distance matrices, is given in \cite{tabaghi2020hyperbolic}.\\
An important special case is the \emph{network embedding problem}, where a (unweighted, undirected) graph $G = (V,E)$ is given. In this setting, the objects $\oo_1, \dotsc, \oo_n$ are given by the vertices $v_1, \dotsc, v_n$ of $G$ and as dissimilarities $d_{ij}$ between two vertices $v_i$ and $v_j$ for $i,j \in V$ we consider the length of the shortest path between $i$ and $j$, i.e., $D = [d_{ij}]$ is the graph distance matrix of $G$. We assume that the graph is connected, i.e., that the shortest path distances $d_{ij}$ are finite for any pair of vertices $v_i$ and $v_j$. \\
While the volume of Euclidean space expands polynomially, exponential expansion can be observed in hyperbolic space, see e.g. \cite{friedrich2019graph}. Because of this property, hyperbolic space is expected to give a better representation for hierarchical or tree-like structures than Euclidean geometry, and therefore is a favorable target space for graph embeddings, see e.g \cite{friedrich2019graph, kleinberg2007geographic}.

\subsection{Landmark-based network embeddings}
For large graphs or data sets, embedding methods encounter several limitations: First, the embedding method may become computationally too costly; second, the dissimilarity matrix $D$ may become too large to hold in memory; and third, the pre-computation of the dissimilarities $d_{ij}$ themselves may become infeasible. In the network embedding problem, for instance, computation of the full distance matrix of an unweighted, undirected graph requires $\cO(|V|^3)$ computations with the Floyd-Warshall algorithm (\cite{floyd1962algorithm}).\footnote{Other methods may be more efficient under additional assumptions on the graph structure, e.g., under bounds on the number of edges.} To alleviate this problem, \emph{landmark-based embedding methods} can be applied; see \cite{de2004sparse} for an application to MDS. For these methods, a comparatively small number of nodes, indexed by $L \subset V$ are designated as \emph{landmarks}. These landmarks are embedded into the target space, based on their pairwise distances. The remaining non-landmark points (indexed by $N = V \setminus L$) are then embedded by `triangulation', i.e., based on their distances with respect to the landmarks, but not to each other. Overall, only all shortest-paths sourced at landmark nodes have to be computed, reducing the cost of distance-computations to $\cO(|L| \cdot |V|^2)$. This represents an asymptotic increase in efficiency, \emph{if $|L|$ scales sublinear with $|V|$}, i.e. if the number of landmarks does not need to be scaled up proportional to $|V|$. In Theorem~\ref{thm:exact} we show that the number of landmarks required to recover point configurations in hyperbolic space $\cH_d$ depends only on the dimension $d$, not on the number of points. In this way, we provide a theoretical basis for the application of landmark methods to hyperbolic embeddings.
\subsection{Stress- vs. strain-based embeddings}
Stress-based embeddings solve the embedding problem \eqref{eq:embed} by minimizing the squared \emph{stress functional}
\begin{equation}\label{eq:stress_func}
\stress(\xx_1, \dotsc, \xx_n)^2 = \sum_{i,j=1}^n (d_{ij} - \dist^\kappa_H(\xx_i, \xx_j))^2.
\end{equation}
Equivalent loss functions are the root-mean-square error 
\begin{equation}\label{eq:RMSE}
RMSE = \sqrt{\frac{1}{n(n-1)} \stress(\xx_1, \dotsc, \xx_n)^2}
\end{equation}
and the relative embedding error
\begin{equation}\label{eq:REE}
REE = \stress(\xx_1, \dotsc, \xx_n) / \sqrt{\sum_{i,j=1}^n d_{ij}},
\end{equation}
which can be compared across different dissimilarity matrices $D$ and input sizes $n$.
Due to the properties of the hyperbolic distance $\dist^\kappa_H(\xx_i, \xx_j)$, minimizing \eqref{eq:stress_func} is a challenging non-convex optimization problem, which has to be solved by numerical minimization with no guarantee of convergence to the global minimum. Approaches by gradient descent and neural-network-based minimization are given in \cite{walter2004h,chamberlain2017neural}. Stress-minimization has been combined with a landmark-based approach and with advanced optimization methods by \cite{zhao2011fast} and \cite{chowdhary2017improved} under the names of \texttt{Rigel} and \texttt{HyPy} respectively. While \texttt{Rigel} uses the derivative-free Nelder–Mead simplex optimization method, \texttt{HyPy} provides an improvement by replacing the simplex optimization with iterative quasi-Newton minimization, for which efficient routines such as LBFGS \cite{zhu1997algorithm} can be used and supplied with the analytic gradient of stress, given in \cite[Eqs.~(3.1),(3.2)]{chowdhary2017improved}.
As an alternative, in our previous work \cite{keller2019hydra},  we have introduced \texttt{hydra}, a strain-based hyperbolic embedding method, based on minimization of the squared \emph{strain functional}
\begin{equation}\label{eq:strain}
\strain(\xx_1, \dotsc, \xx_n)^2 := \sum_{i,j} (\cosh(\sqrt{\kappa}\,d_{ij}) - (\xx_i \circ \xx_j))^2.
\end{equation}
The strain functional is obtained from the stress functional by the transformation of all distances by hyperbolic cosine. As shown in \cite{keller2019hydra},
$\texttt{hydra}$ has two important theoretical properties. The algorithm recovers any configuration of points in $d$-dimensional hyperbolic space up to isometry and it is guaranteed to return the globally optimal solutions of the strain-minimization problem \eqref{eq:strain}. In numerical experiments, \texttt{hydra} is faster than \texttt{HyPy} and \texttt{Rigel} by several orders of magnitudes, see~\cite{keller2019hydra}. While these are attractive properties, the strain functional \eqref{eq:strain} is not as interpretable as the stress functional \eqref{eq:stress_func}, which aligns with the popular `least-squares' paradigm of regression analysis and numerical approximation. Therefore, if we accept RMSE/REE as the target measure of embedding quality, there are two practical uses of the \texttt{hydra} method, cf. \cite{keller2019hydra}:
\begin{itemize}
\item Use \texttt{hydra} as a stand-alone method. This gives a very fast embedding method (several orders of magnitude faster than \texttt{HyPy/Rigel}), but with a relative embedding error (REE) that is approx. 1.0 to 1.5 times larger than the REE of stress-based embeddings produced by \texttt{HyPy/Rigel}.
\item Use the result of \texttt{hydra} as an initial condition for stress minimization. This gives an embedding method (called \texttt{hydra+} in \cite{keller2019hydra}) that is typically 20\% - 50\% faster than \texttt{HyPy/Rigel} and results in a smaller REE in all numerical tests with improvements up to 40\%.
\end{itemize}

We emphasize that these properties are limited by the fact that \texttt{hydra} is not a landmark-based method, and therefore can not be scaled up to the very large network instances considered for \texttt{HyPy} in \cite{chowdhary2017improved}.

\subsection{Strain-minimization with landmarks}
Both the \texttt{Rigel} algorithm of \cite{zhao2011fast}  and the \texttt{HyPy} algorithm of \cite{chowdhary2017improved} combine the method of stress-minimization with the landmark heuristic. That is, the stress functional is first minimized on landmarks only, and then the stress between landmark and non-landmark-nodes is minimized. In addition, \texttt{HyPy} applies heuristics of gradually expanding the landmark set during optimization and of restarting the optimization procedure from multiple random initial point configurations. Due to the landmark approach, \texttt{Rigel/HyPy} are able to deal with very large instances of the network embedding problem with millions of nodes. The \texttt{hydra} method, as proposed in \cite{keller2019hydra}, on the other side, requires knowledge of the full dissimilarity matrix $D = [d_{ij}]$ and is therefore only applicable to small or medium sized networks. This is our motivation, to introduce here a landmarked version of \texttt{hydra}, and to show that strain-minimization can be combined with the landmark heuristic while retaining all its attractive theoretical properties. In particular, we show in Theorem~\ref{thm:exact} that the necessary size of the landmark set only depends on the intrinsic dimension $d$ of the embedding space, but not on the total number of nodes (or other objects) to be embedded. This gives, for the first time, a sound theoretical basis for the application of the landmark heuristic for hyperbolic embedding problems. 
In terms of practical applications, we demonstrate in Section~\ref{sec:numerics} that the landmarked hydra method \texttt{L-hydra} can be scaled up to all network embedding problems considered in \cite{chowdhary2017improved} with up to almost 4 million nodes. The advantages of \texttt{hydra} as compared with \text{HyPy/Rigel} are effectively retained in the landmarked setting (see Section~\ref{sec:numerics} for details): 
\begin{itemize}
\item As a stand-alone method, \texttt{L-hydra} is a very fast embedding method (about ten times faster than \texttt{HyPy} with landmarks), but with a higher relative embedding error than the stress-based embedding produced by \texttt{HyPy}. 
\item Using the result of \texttt{L-hydra} as an initial condition for stress minimization (we call this method \texttt{L-hydra+}) gives an embedding method that is both faster than \texttt{HyPy} and produces embeddings with smaller error.
\end{itemize}

\section{The landmarked hydra algorithm}
\subsection{Formulation of the embedding problem}

Consider a set of objects $\oo_1, \dotsc, \oo_n$, which are partitioned into landmarks $(\oo_i)_{i \in L}$ and non-landmarks $(\oo_i)_{i \in N}$. The pairwise dissimilarity $d_{ij}$ of $\oo_i$ and $\oo_j$ is only assumed to be known when $\oo_i$ or $\oo_j$ is a landmark. By permuting objects, the dissimilarity matrix $D$ can be arranged as $D = \begin{psmallmatrix} D_{L} & D_N^\top \\ D_{N} & D_R \end{psmallmatrix}$ where only $D_L$  -- containing all pairwise dissimilarities between landmark nodes -- and $D_N$ -- containing all dissimilarities between one landmark and one non-landmark node -- are known.\\
We first discuss the problem of exact recovery in hyperbolic space. That is, we assume that for a given embedding dimension $d$, the objects $\oo_i$ can be perfectly described by points $\xx_i$ in the hyperbolic manifold $\cH_d$, i.e. the dissimilarity $d_{ij}$ between any pair of objects $(\oo_i, \oo_j)$ can be described exactly by the hyperbolic distance between the corresponding points $\xx_i,\,\xx_j \in \cH_d$:
\[ d_{ij} = \dist^\kappa_H(\xx_i, \xx_j)  = \frac{1}{\sqrt{\kappa}} \arcosh (\xx_i \circ \xx_j). \]
Transforming all distances by the hyperbolic cosine, we obtain
\begin{equation}\label{eq:transf_dist}
\cosh(\sqrt{\kappa}\,d_{ij}) = \xx_i \circ \xx_j.
\end{equation}
We set $A = \cosh \left( \sqrt{\kappa}\,D\right)$, where the hyperbolic cosine is applied elementwise to the dissimilarity matrix $D$. Setting $l = |L|$ and $m = |N|$ we write
\[X = \left( \begin{array}{r} X_{L} \\ X_{N} \end{array}\right) = \left(\xx_1, \dotsc, \xx_l, \xx_{l+1}, \dotsc, \xx_{l+m} \right)^\top \in \RR^{(l+m) \times (d+1)}\]
for the coordinate matrix of some points $\xx_1, \dotsc, \xx_{l+m}$ in $\cH_d$. Finally, let $J$ be the $(d+1) \times (d+1)$ diagonal matrix
\begin{equation*}
J = \diag(1,-1, \dotsc, -1),
\end{equation*}
cf. \cite[\S3.1]{ratcliffe2006foundations}. Equation \eqref{eq:transf_dist} for $i \in L \cup N$ and $j \in L$ can now be written in compact form as
\begin{equation}\label{eq:opt_compact}
 \left( \begin{array}{r} A_{L} \\ A_{N} \end{array}\right) = \left( \begin{array}{r} X_{L} \\ X_{N} \end{array}\right) J\,X_L^\top.
\end{equation}
Thus, the coordinate matrix $X = \begin{psmallmatrix} X_{L} \\ X_{N} \end{psmallmatrix}$ can be recovered from the known dissimilarities $D_L, D_N$ if we can solve \eqref{eq:opt_compact} for $X_L$ and $X_N$. Due to its block structure, equation \eqref{eq:opt_compact} can be split into two parts and solved in two stages, where the coordinates for landmarks must be determined from the first equation and the coordinates of non-landmarks from the second one:
\begin{subnumcases}{\label{eq:stagewise}}
  A_{L} = X_{L} J X_{L}^\top \label{eq:min_L} \label{eq:stagewise_L}\\
  A_{N} = X_{N} J X_{L}^\top \label{eq:min_N}. \label{eq:stagewise_N}.
\end{subnumcases}
However, we can only hope to recover $(X_L,X_N)$ from these equations up to hyperbolic isometry,  i.e., up to a distance-preserving transformation $\psi: \cH_d \to \cH_d$. From \cite[Sec.~3.1,3.2]{ratcliffe2006foundations}, any hyperbolic isometry can be written as
\begin{equation}\label{eq:isometry}
\psi(\xx) = T\xx,
\end{equation}
where $T \in \RR^{(d +1) \times (d+1)}$ is an invertible matrix, which has $T_{11} > 0$ and satisfies
\begin{equation}\label{eq:Lorentz_iso}
T^\top J T = TJT^\top = J.
\end{equation}
Matrices with this property are called \emph{positive Lorentz matrices} and both $\cH_d$ and $\Lor_+$ are invariant under such transformations. Our observation can now be formalized as follows:
\begin{lem}Let $T \in \RR^{(d +1) \times (d+1)}$ be a positive Lorentz matrix and $(X_L, X_N)$  a solution of \eqref{eq:stagewise}. Then $(\tilde X_L, \tilde X_N) = (X_L T, X_N T)$ is also a solution of \eqref{eq:stagewise}. If $X_L$ and $X_N$ are hyperbolic coordinate matrices of points $(\xx_1, \dotsc, \xx_{l+m}) \in \cH_d$ , then $\tilde X_L$ and $\tilde X_N$ are also hyperbolic coordinate matrices of hyperbolically isometric points $(\tilde \xx_1, \dotsc, \tilde \xx_{l+m}) \in \cH_d$.
\end{lem}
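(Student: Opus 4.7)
The plan is to verify both claims by direct substitution, relying only on the defining algebraic identity $T J T^\top = J$ for positive Lorentz matrices together with the cited invariance of $\cH_d$ under such transformations.

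First, I would tackle the system \eqref{eq:stagewise}. Setting $\tilde X_L = X_L T$ and $\tilde X_N = X_N T$ and plugging into the right-hand side of \eqref{eq:stagewise_L}, I would compute
\[
\tilde X_L J \tilde X_L^\top = X_L T J T^\top X_L^\top = X_L J X_L^\top = A_L,
\]
where the middle equality uses $T J T^\top = J$ from \eqref{eq:Lorentz_iso} and the last equality uses that $(X_L, X_N)$ solves \eqref{eq:stagewise}. Exactly the same one-line calculation handles \eqref{eq:stagewise_N}:
\[
\tilde X_N J \tilde X_L^\top = X_N T J T^\top X_L^\top = X_N J X_L^\top = A_N.
\]
Hence $(\tilde X_L, \tilde X_N)$ is a solution of \eqref{eq:stagewise}, which is the first claim.

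For the second claim, I need to check that if every row of $X_L$ and $X_N$ lies in $\cH_d$, then the same is true for the rows of $\tilde X_L$ and $\tilde X_N$. Taking any row $\xx$ of $X_L$ or $X_N$ (viewed as a row vector, so that $\xx \circ \xx = \xx J \xx^\top$), the transformed row is $\xx T$, and
\[
(\xx T) \circ (\xx T) = \xx T J T^\top \xx^\top = \xx J \xx^\top = \xx \circ \xx = 1,
\]
so the hyperboloid equation $\xx \circ \xx = 1$ is preserved. The remaining condition $x_1 > 0$ — i.e.\ that $\xx T \in \Lor_+$ — follows from the fact, cited from \cite[\S3.1,\S3.2]{ratcliffe2006foundations} immediately before the lemma, that positive Lorentz matrices leave $\Lor_+$ (and hence $\cH_d$) invariant. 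Consequently, $\tilde X_L$ and $\tilde X_N$ are genuine hyperbolic coordinate matrices, and since the transformation $\xx \mapsto \xx T$ is a hyperbolic isometry by \eqref{eq:isometry}--\eqref{eq:Lorentz_iso}, the resulting points are isometric images of the original ones.

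There is no real obstacle here: the proof is a two-line matrix computation plus an appeal to the standard invariance statement for positive Lorentz matrices. The only point requiring a small amount of care is the bookkeeping between row- and column-vector conventions, since the coordinate matrices store points as rows while the isometry \eqref{eq:isometry} is written in column-vector form — but this is harmless because $T J T^\top = J$ holds by assumption, so the computation goes through unchanged.
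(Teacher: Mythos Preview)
Your proposal is correct and follows essentially the same approach as the paper: the same two-line substitution using $T J T^\top = J$ for the first claim, and the invariance of $\cH_d$ under positive Lorentz matrices for the second. Your treatment of the second claim is slightly more explicit than the paper's, but the argument is identical in substance.
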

\begin{proof}
The first statement follows directly from 
\[\tilde X_L  J \tilde X_L^\top  = X_L T J T^\top X_L^\top = X_L J X_L^\top  = A_L\]
and 
\[\tilde X_N  J \tilde X_L^\top  = X_N T J T^\top X_L^\top = X_N J X_L^\top  = A_N.\]
The second statement follows from the invariance of $\cH_d$ under the positive Lorentz matrix $T$. 
\end{proof}

\subsection{Solving the embedding problem}
In general, one cannot expect given dissimilarities $d_{ij}$ to be represented exactly by hyperbolic distances in $\cH_d$, and therefore should not expect an exact solution to \eqref{eq:stagewise}. For this reason, we replace \eqref{eq:stagewise} by its relaxation
\begin{subnumcases}{}\label{eq:relaxation}
\hat X_L = \argmin \set{\norm{A_L - X_L J X_L^\top}_F: X_L \in \RR^{l\times (d+1)}} \label{eq:relaxation_L}\\
\hat X_N = \argmin \set{\norm{A_N - X_N J X_L^\top}_F: X_N \in \RR^{m\times (d+1)}},\label{eq:relaxation_N}
\end{subnumcases}
where $\norm{.}$ indicated the Frobenius norm on matrices. Note that the relaxation consists of two parts: First, equality has been replaced by minimality of the Frobenius distance. Second, $\hat X_L$ and $\hat X_N$ are no longer constrained to coordinate matrices of points in $\cH_d$, but of points in the ambient Lorentz space $\Lor$.

The \texttt{L-hydra} algorithm consists of the stepwise solution of \eqref{eq:relaxation}:
\begin{itemize}
\item In step 1 and 2, the minimization problem \eqref{eq:relaxation_L} is solved by a matrix Eigendecomposition. This is equivalent to the \texttt{hydra} embedding of \cite{keller2019hydra}, applied to the landmarks only.
\item In step 3, the least-squares problem \eqref{eq:relaxation_N} is solved using the Moore-Penrose inverse of $\hat X_L J$. 
\end{itemize}
The details of each step are listed in Algorithm~\ref{algo:hydra_lm}.

\begin{algorithm}
\caption{\texttt{L-hydra}$\,(D_L, D_N,d,\kappa)$}
\label{algo:hydra_lm}
\begin{description}
\item[\textbf{Input}]
\begin{itemize}
\item Two matrices $D_L \in \Rplus^{l \times l}$ and $D_N \in \Rplus^{m \times l}$, of which the first represents the dissimilarities between pairs of landmarks and the second between pairs of landmarks and non-landmarks. The matrix $D_L$ must be symmetric with zero diagonal.
\item Embedding dimension $d \le (l+m-1)$
\item Parameter $\kappa > 0$; the negative of the hyperbolic curvature $-\kappa$. 
\end{itemize}
\item[\textbf{Step 1}] Set\vspace{-8pt}
\begin{subnumcases}{\label{eq:Acosh}}
A_L  =  \cosh\left(\sqrt{\kappa} D_L\right)\\
A_N  =  \cosh\left(\sqrt{\kappa} D_N\right)
\end{subnumcases}
with $\cosh$ applied elementwise,  and compute the Eigendecomposition of the matrix $A_L$
\begin{equation}\label{eq:Eigen}
A_L = Q \Lambda_L Q^\top,
\end{equation}
where $\Lambda_L$ is the diagonal matrix of the Eigenvalues $\lambda_1 \ge \dotsm \ge \lambda_l$ and the columns of $Q$ are the cor\-responding Eigenvectors $\qq_1, \dotsc, \qq_l$.
\item[\textbf{Step 2}] Assuming that the last $d$ Eigenvalues are negative, allocate the $l \times (d+1)$-matrix
\begin{equation}\label{eq:X_L}
\hat X_L := \left[\sqrt{\lambda_1}\,\qq_1 \quad \sqrt{(-\lambda_{l-d+1})}\,\qq_{l-d+1} \quad \dotsm \quad \sqrt{(-\lambda_{l})}\,\qq_{l} \right].
\end{equation}
If not all of the last $d$ Eigenvalues are negative, return \textbf{Null}. The algorithm must be rerun with smaller embedding dimension $d$ for non-null result.
\item[\textbf{Step 3}] Allocate the $m \times (d+1)$-matrix
\begin{equation}\label{eq:X_N}
\begin{aligned}
\hat X_N :&= A_N \hat X_L \diag\left(\frac{1}{\lambda_1}\, \quad -\frac{1}{\lambda_{l-d+1}}\, \quad \dotsm \quad -\frac{1}{\lambda_l}\, \right) \\
    &= A_N \left[\frac{\qq_1}{\sqrt{\lambda_1}}\, \quad -\frac{\qq_{l-d+1}}{\sqrt{-\lambda_{l-d+1}}}\, \quad \dotsm \quad -\frac{\qq_l}{\sqrt{-\lambda_l}}\, \right].
\end{aligned}
\end{equation}
\item[\textbf{Return}] Matrix $\hat{X} = \begin{pmatrix} \hat{X}_L \\ \hat{X}_N \end{pmatrix}$ whose rows are coordinates in positive Lorentz space $\Lor_+$.
\end{description}
\end{algorithm}
\pagebreak
\subsection{Theoretical Properties}
The key theoretical properties of the \texttt{L-hydra} algorithm are summarized in the following theorems.

\begin{thm}[Exact Recovery] \label{thm:exact}
Let $\xx_1, \dotsc, \xx_{m+l}$ be points in hyperbolic $d$-space $\cH_d$, of which the first $l \geq d$ are designated as landmarks. Assume that the landmarks are not all contained in a single hyperplane of $\cH_d$ and let $D = [d_{ij}] = [\dist^\kappa_H(\xx_i,\xx_j)]$ be the matrix of their hyperbolic distances with curvature $-\kappa$, partitioned as 
\[ D = \begin{pmatrix} D_{L} & D_N^\top \\ D_{N} & D_R \end{pmatrix}.\]
Then the algorithm $\texttt{L-hydra}\,(D_L,D_N,d,\kappa)$ recovers all points $\xx_1, \dotsc, \xx_{l+m}$ up to isometry. That is, the rows $\hat \xx_1, \dotsc, \hat \xx_{l+m}$ of the matrix $\hat X$ returned by $\texttt{L-hydra}\,(D_L,D_N,d,\kappa)$ are points in $\cH_d$ and satisfy
\begin{equation}\label{eq:exact}
    \dist_H(\hat \xx_i, \hat \xx_j) = \dist_H(\xx_i, \xx_j) \qquad \text{for all } i,j = 1, \dotsc, l+m.
 \end{equation}
\end{thm}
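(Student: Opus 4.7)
The plan is to verify the three Lorentz-product identities
\[
\hat X_L J \hat X_L^\top = A_L, \qquad \hat X_N J \hat X_L^\top = A_N, \qquad \hat X_N J \hat X_N^\top = A_R,
\]
where $A_R := \cosh(\sqrt{\kappa} D_R)$ is the unobserved Gram block between non-landmarks, together with the constraints that every row of $\hat X$ has unit Lorentz norm and strictly positive first coordinate. By \eqref{eq:hyper_dist}, these imply that $\hat X$ places every point on $\cH_d$ and that all pairwise hyperbolic distances match those of the original configuration, which is \eqref{eq:exact}.

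First I would analyse the spectrum of $A_L$. Writing $A_L = X_L J X_L^\top$, the no-hyperplane assumption forces $X_L \in \RR^{l \times (d+1)}$ to have full column rank $d+1$, so Sylvester's law of inertia yields signature $(1,d,l-d-1)$ for $A_L$: one positive eigenvalue, $d$ negative eigenvalues, and the remaining $l-d-1$ eigenvalues zero. This justifies the selection in Step~2, and plugging the truncated spectral decomposition into $\hat X_L J \hat X_L^\top$ collapses directly to $A_L$. The diagonal entries of $A_L$ equal $\cosh 0 = 1$, giving $\hat \xx_i \circ \hat \xx_i = 1$ for each landmark row, and Perron--Frobenius applied to the entrywise-positive matrix $A_L$ gives $\qq_1 > 0$ componentwise, so the first column of $\hat X_L$ is strictly positive. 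This step is essentially the unlandmarked \texttt{hydra} argument of \cite{keller2019hydra}.

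For the non-landmark block, a direct computation using the Step~3 formula gives $\hat X_N J \hat X_L^\top = A_N P$, where $P$ is the orthogonal projection onto the span of the $d+1$ selected eigenvectors, i.e.\ onto the range of $A_L$. Since $A_N = X_N J X_L^\top$ has its row space inside the column space of $X_L$, which coincides with the range of $A_L$, one gets $A_N P = A_N$. An analogous computation yields $\hat X_N J \hat X_N^\top = A_N A_L^+ A_N^\top$, with $A_L^+$ the Moore--Penrose pseudo-inverse. The key algebraic lemma is then
\[
X_L^\top A_L^+ X_L = J,
\]
which I would prove from the SVD $X_L = U \Sigma V^\top$: it gives $A_L = U M U^\top$ with invertible $M = \Sigma V^\top J V \Sigma$, hence $A_L^+ = U M^{-1} U^\top$, and the left-hand side collapses to $V (V^\top J V)^{-1} V^\top = V (V^\top J V) V^\top = J$, using $V V^\top = I$ and $J^{-1} = J$. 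Substituting back yields $A_N A_L^+ A_N^\top = X_N J^3 X_N^\top = X_N J X_N^\top = A_R$.

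Finally, the diagonal of $A_R$ is identically $1$, so the unit-norm condition also holds for each non-landmark row; and the first column of $\hat X_N$ is a positive scalar multiple of $A_N \qq_1$, which is componentwise positive because $A_N$ has entries $\ge 1$ and $\qq_1 > 0$. This confirms that $\hat X$ lies in $\cH_d$ and that all pairwise Lorentz products agree with the originals, giving \eqref{eq:exact}. The main obstacle I foresee is the lemma $X_L^\top A_L^+ X_L = J$: the indefiniteness of $J$ blocks the Cholesky-type shortcut available in the Euclidean MDS analogue, and the signs arising from the negative eigenvalues of $A_L$ must be tracked carefully to confirm that the $J$ factors land in exactly the right positions to reproduce $A_R$ rather than a differently-signed or projected surrogate of it.
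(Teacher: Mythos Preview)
Your proof is correct, but it takes a genuinely different route from the paper's. The paper argues \emph{geometrically}: once the landmark embedding is exact, the two landmark configurations $X_L$ and $\hat X_L$ have identical hyperbolic Gram matrix, so by the classification of isometries there is a positive Lorentz matrix $T$ with $\hat X_L = X_L T$; the pseudoinverse identity $(J\hat X_L^\top)^+ = (JX_L^\top)^+ T$ then immediately gives $\hat X_N = X_N T$, and $\hat X = XT$ settles everything at once. Your argument is \emph{algebraic}: you bypass the isometry $T$ entirely and instead verify all three Gram blocks $\hat X_L J \hat X_L^\top$, $\hat X_N J \hat X_L^\top$, $\hat X_N J \hat X_N^\top$ by direct computation, the last one resting on the lemma $X_L^\top A_L^+ X_L = J$. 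That lemma is sound (your SVD computation with $J^{-1}=J$ and $VV^\top=I$ goes through), and it is the genuine extra ingredient your approach needs beyond the paper's. The trade-off is clear: the paper's proof is shorter and more conceptual but imports the structure theory of hyperbolic isometries from \cite{ratcliffe2006foundations}; yours is more computational but entirely self-contained in linear algebra and makes the recovery of the unobserved block $A_R$ explicit rather than implicit in the isometry.
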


\begin{thm}[Optimal Approximation] \label{thm:optimal}
Suppose that $A_L = \cosh\left(\sqrt{\kappa} D_L\right)$ has at least $d \in \NN$ strictly negative Eigenvalues. Then, the matrices $\hat X_L, \hat X_N$ returned by the algorithm \texttt{L-hydra}$\,(D_L,D_N,d,\kappa)$ solve the minimization problem \eqref{eq:relaxation}. Moreover, the first columns of $\hat X_L$ and $\hat X_N$ are strictly positive; equivalently, all rows of $\hat X_L$ and $\hat X_N$ represent points in positive Lorentz space $\Lor_+$.
\end{thm}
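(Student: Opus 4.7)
The proof naturally decomposes along the two stages of the algorithm, corresponding to the two sub-problems in the relaxation \eqref{eq:relaxation}, with the positivity claim treated separately at the end.

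For the landmark stage \eqref{eq:relaxation_L}, the plan is to observe that, as $X_L$ ranges over $\RR^{l\times(d+1)}$, the image $X_L J X_L^\top$ traces out exactly the symmetric matrices of rank at most $d+1$ with signature $(p_+,p_-)$ satisfying $p_+\leq 1$ and $p_-\leq d$. The Frobenius-best approximation of $A_L$ by such matrices is given by the spectral truncation that retains the largest eigenvalue $\lambda_1$ and the $d$ most negative eigenvalues $\lambda_{l-d+1},\dotsc,\lambda_l$; this is the content of the analogous optimality theorem for unlandmarked \texttt{hydra} in \cite{keller2019hydra}. Substituting these into the factorization produces exactly the formula \eqref{eq:X_L}, so $\hat X_L$ is a minimizer. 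The hypothesis that $A_L$ has at least $d$ strictly negative eigenvalues is precisely what makes this truncation well-defined.

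For the non-landmark stage \eqref{eq:relaxation_N}, with $\hat X_L$ now fixed the problem becomes a standard linear least-squares problem $\min_{X_N}\|A_N - X_N B\|_F^2$ with $B:=J\hat X_L^\top$, whose minimum-norm solution is $\hat X_N = A_N B^+$ via the Moore--Penrose pseudoinverse. To verify the explicit formula \eqref{eq:X_N}, I would first check that $B$ has full row rank $d+1$. This reduces to $\hat X_L$ having full column rank, which holds because its columns are pairwise orthogonal multiples of the eigenvectors $q_1, q_{l-d+1},\dotsc, q_l$ with non-zero scaling factors $\sqrt{\lambda_1}$ and $\sqrt{-\lambda_{l-d+1}},\dotsc,\sqrt{-\lambda_l}$; the negative eigenvalues are non-zero by hypothesis, and $\lambda_1>0$ follows from the Perron--Frobenius theorem applied to the entrywise positive matrix $A_L=\cosh(\sqrt{\kappa}D_L)$. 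With full row rank established, $B^+=B^\top(BB^\top)^{-1}$, and a direct computation using $\hat X_L^\top \hat X_L=\diag(\lambda_1,-\lambda_{l-d+1},\dotsc,-\lambda_l)$ together with the identity $J^2=I$ reduces the pseudoinverse to a diagonal scaling of $\hat X_L J$, recovering the expression in \eqref{eq:X_N} after careful sign-bookkeeping.

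Finally, the positivity claim follows from the strong form of Perron--Frobenius: since $A_L$ is strictly positive (as $\cosh\geq 1$), its top eigenvalue $\lambda_1$ is simple and admits a strictly positive eigenvector $q_1$. The first column of $\hat X_L$ equals $\sqrt{\lambda_1}\, q_1$, hence is strictly positive; and the first column of $\hat X_N$ equals $A_N q_1/\sqrt{\lambda_1}$, which, being the product of the entrywise positive matrix $A_N$ with the strictly positive vector $q_1$, is also strictly positive. The main obstacle I foresee is the indefinite Eckart--Young step underlying the first paragraph, where the signature constraint $(1,d)$ distinguishes the problem from standard low-rank approximation; but this is handled completely in \cite{keller2019hydra} and can be invoked directly rather than reproved.
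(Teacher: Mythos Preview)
Your proposal is correct and follows essentially the same route as the paper: the non-landmark stage via the Moore--Penrose pseudoinverse (with full row rank of $J\hat X_L^\top$ coming from the orthogonality and nonvanishing scalings of the eigenvector columns) and the positivity via Perron--Frobenius are identical to the paper's argument. The only difference is presentational: where you invoke \cite{keller2019hydra} for the indefinite Eckart--Young step in the landmark stage, the paper elects to give a self-contained proof by diagonalizing via $Q$, applying Sylvester's law of inertia to constrain the signature of $Y_L J Y_L^\top$, and then bounding the Frobenius distance below using the Wielandt--Hoffman inequality.
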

The first part of the result, i.e., the optimality of the embedding of landmarks, follows from \cite{keller2019hydra}. For convenience, we provide a self-contained proof of both parts:
\begin{proof}
Let $A_L = Q \Lambda_L Q^\top$ be the Eigendecomposition of $A_L = X_L J X_L^\top$, where $\Lambda$ is the diagonal matrix of the Eigenvalues $\lambda_1 \ge \lambda_2 \ge \dotsm \ge \lambda_l$ of $A_L$ in descending order and $Q$ is unitary, i.e., $Q^\top Q = I$. By the unitary invariance of the Frobenius norm, solving the minimization problem \eqref{eq:relaxation_L} is equivalent to solving
\begin{equation}\label{eq:minimize_diag}
\hat Y_L = \argmin \set{\norm{Y_L J Y_L^\top - \Lambda}_F: Y_L \in \RR^{l \times (d+1)}},
\end{equation}
where $\hat Y_L$ is related to $\hat X_L$ by $X_L = Q \hat Y_L$. We claim that \eqref{eq:minimize_diag} is minimized by 
\begin{equation}\label{eq:X_L_tilde}
\hat Y_L := \left[\sqrt{\lambda_1}\,\ee_1 \quad \sqrt{(-\lambda_{l-d+1})}\,\ee_{l-d+1} \quad \dotsm \quad \sqrt{(-\lambda_{l})}\,\ee_{l} \right].
\end{equation}
To see this, note that on the one hand, 
\[\norm{\hat Y_L J \hat Y_L^\top - \Lambda} = \sum_{i=2}^{l-d} \lambda_i^2. \]
On the other hand, for an arbitrary $Y_L \in \RR^{l \times (d+1)}$, denote the Eigenvalues of $Y_L J Y_L^\top$ by $\eta_1 \ge \eta_2 \ge \dotsm \ge \eta_l$. By Sylvester's law of inertia, owing to the structure of $J$, only $\eta_1$ is strictly positive, only the last $d$ Eigenvalues $\eta_{l-d+1}, \dotsc, \eta_l$ are strictly negative, and all other Eigenvalues must be zero. Hence, using the Wielandt-Hofmann inequality, we obtain
\begin{equation*}
\norm{Y_L J Y_L^\top - \Lambda}_F \ge \sum_{i=1}^l \left(\eta_i - \lambda_i \right)^2 \ge \sum_{i=2}^{l-d} \lambda_i^2 = \norm{\hat Y_L J \hat Y_L^\top - \Lambda}_F
\end{equation*} 
for all $Y_L \in \RR^{l \times (d+1)}$, showing the optimality of $\hat Y_L$ for \eqref{eq:minimize_diag}. Transforming back to $\hat X_L = Q \hat Y_L$ yields \eqref{eq:X_L} and shows that $\hat X_L$ solves \eqref{eq:relaxation_L}. It remains to solve \eqref{eq:relaxation_N}, which is a least-squares problem, whose solution can be written as
\begin{equation} \label{eq:Pseudo}
\hat X_N = A_N (J \hat X_L^\top)^+,
\end{equation}
where $(J \hat X_L^\top)^+$ denotes the Moore-Penrose-Pseudoinverse; see \cite[Theorem.~(8.1) and Remark.~(8.2)]{laub2005matrix}. Since $J \hat X_L^\top$ has full row rank, we can write 
\begin{equation}
(J \hat X_L^\top)^+ = \hat X_L J(J \hat X_L^\top \hat X_L J)^{-1} = \hat X_L(\hat X_L^\top \hat X_L)^{-1} J = \hat X_L \diag\left(\frac{1}{\lambda_1}\, \quad -\frac{1}{\lambda_{l-d+1}}\, \quad \dotsm \quad -\frac{1}{\lambda_l}\, \right).
\end{equation}
Inserting this into \eqref{eq:Pseudo} yields \eqref{eq:X_N}, completing the main part of the proof. Finally, note that by the Perron-Frobenius theorem, the leading Eigenvector $\qq_1$ of the positive matrix $A_L$ must also be positive. Together with \eqref{eq:X_N}, this shows that the first columns of both $\hat X_L$ and $\hat X_N$ are positive. 
\end{proof}

\begin{proof}[Proof of Thm.~\ref{thm:exact}]
Denote by $X = \begin{psmallmatrix}X_L\\X_N\end{psmallmatrix}$ the coordinate matrix of the original points $\xx_1, \dotsc, \xx_{l+m}$ and by $\hat X = \begin{psmallmatrix}\hat X_L\\\hat X_N\end{psmallmatrix}$ the coordinate matrix of the points $\hat \xx_1, \dotsc, \hat \xx_{l+m}$ returned by \texttt{L-hydra}. Since the points $\xx_1, \dotsc, \xx_l$ are not all contained in a single hyperplane of $\cH_d$, it follows from \cite[\S3.2]{ratcliffe2006foundations} that there are $d+1$ among them which are linearly independent as points in $\RR^{d+1}$, and thus that $X_L$ has the full column rank $d+1$. Let $\lambda_1 \ge \lambda_2 \ge \dotsm \ge \lambda_{l}$ be the Eigenvalues of $A_L = X_L J X_L^\top$. By the same argument as in the proof of Theorem~\ref{thm:optimal} it follows that only $\lambda_1$ is strictly positive, only the $d$ last Eigenvalues $\lambda_{l-d+1}, \dotsc, \lambda_l$ are strictly negative and all others are zero. From the proof of Theorem~\ref{thm:optimal} we find that the residual of \eqref{eq:relaxation_L} is
\[\norm{A_L - \hat X_L J \hat X_L^\top}^2_F = \sum_{i=2}^{l-d} \lambda_i^2 = 0,\]
i.e., the embedding of the landmarks is exact. This means that $X_L J X_L^\top = \hat X_L J \hat X_L^\top$, i.e.,  $\hat X_L$ and $X_L$ are hyperbolically isometric and by \eqref{eq:isometry}, there exists a positive Lorentz matrix $T$ such that
\[\hat X_L = X_L T.\]
For the Moore-Penrose-Pseudoinverse $(J X_L^\top)^+$ it follows that 
\[(J \hat X_L^\top)^+ = (J X_L^\top)^+ T.\]
Moreover, since $X_L$ has the full column rank, $(J X_L^\top)^+$ is a right-inverse of $(J X_L^\top)$, and we have
\begin{equation*} 
\hat X_N = A_N (J \hat X_L^\top)^+ = A_N (J X_L^\top)^+ T = X_N J X_L^\top (J X_L^\top)^+ T = X_N T.
\end{equation*}
Together, it follows that $\hat X = X T$, i.e., $\hat X$ and $X$ are hyperbolically isometric and \eqref{eq:exact} must hold for all (both landmark and non-landmark) points.
\end{proof}

\begin{cor}[Consistency of landmark and non-landmark embedding] \label{thm:consistency}
The landmark embedding is consistent with the non-landmark embedding, meaning that re-embedding the $j$-th landmark point ($j \in \set{1, \dotsc, l}$) as a non-landmark will not change its representation $\hat \xx_j$, or equivalently 
\begin{equation}\label{eq:consistent}
\hat X_L  = A_L \left[\frac{\qq_1}{\sqrt{\lambda_1}}\, \quad -\frac{\qq_{l-d+1}}{\sqrt{-\lambda_{l-d+1}}}\, \quad \dotsm \quad -\frac{\qq_l}{\sqrt{-\lambda_l}}\, \right].
\end{equation}
\end{cor}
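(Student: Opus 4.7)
The plan is to observe that the right-hand side of \eqref{eq:consistent} is literally the matrix that the algorithm would produce in Step~3 if the landmark coordinates were recomputed as if they were non-landmark points; that is, it is the second form of \eqref{eq:X_N} with $A_N$ replaced by $A_L$. Thus the two formulations of the corollary are equivalent by construction, and the substantive content is just the matrix identity \eqref{eq:consistent} between this ``re-embedded'' matrix and the landmark matrix $\hat X_L$ defined in \eqref{eq:X_L}.

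To establish \eqref{eq:consistent}, I would exploit the eigenvalue relation $A_L \qq_i = \lambda_i \qq_i$ coming from the eigendecomposition \eqref{eq:Eigen} and verify the identity column by column. For the leading column, applying $A_L$ to $\qq_1/\sqrt{\lambda_1}$ immediately yields $\sqrt{\lambda_1}\,\qq_1$, matching the first column of $\hat X_L$. For each of the remaining columns, indexed by $i \in \set{l-d+1, \dotsc, l}$, the eigenvalue $\lambda_i$ is strictly negative by the standing hypothesis of Theorem~\ref{thm:optimal}, and applying $A_L$ to $-\qq_i/\sqrt{-\lambda_i}$ produces $-\lambda_i\,\qq_i/\sqrt{-\lambda_i} = \sqrt{-\lambda_i}\,\qq_i$, again matching the corresponding column of $\hat X_L$. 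Stacking these computations reproduces $\hat X_L$ exactly, which is the claimed identity.

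The hard part will really just be sign bookkeeping for the negative eigenvalues: one has to track that the leading minus signs attached to $\qq_i/\sqrt{-\lambda_i}$ in \eqref{eq:consistent} combine correctly with the negative values of $\lambda_i$ so that each column emerges with the positive coefficient $\sqrt{-\lambda_i}$ demanded by \eqref{eq:X_L}. No structural obstacle arises beyond this routine algebra, since the eigendecomposition of $A_L$ already diagonalises the problem and the Perron--Frobenius sign of $\qq_1$ established in Theorem~\ref{thm:optimal} is not needed for the present identity.
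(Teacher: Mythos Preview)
Your proposal is correct and follows essentially the same approach as the paper: both verify \eqref{eq:consistent} column by column using the eigenvalue relation $A_L\qq_i=\lambda_i\qq_i$, and both identify the right-hand side as the Step~3 formula \eqref{eq:X_N} with $A_N$ replaced by $A_L$. Your treatment of the sign bookkeeping for the negative eigenvalues is in fact slightly more explicit than the paper's.
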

\begin{rem}
Re-embedding of landmarks is equivalent to inserting the transformed landmark dissimilarity matrix $A_L$, instead of $A_N$, into \eqref{eq:X_N}. Hence, consistency can be expressed by equation \eqref{eq:consistent}.
\end{rem}
\begin{proof}
For each eigenvector $\qq_j$ with $j=l-d, \dotsc, l$ of $A_L$ we have
\[ \frac{A_L \qq_j}{\sqrt{-\lambda_j}} = \frac{\lambda_j \qq_j}{\sqrt{-\lambda_j}} = \sqrt{-\lambda_j} \qq_j, \]
and -- omitting the minus sign -- also for $j=1$. Applying this column-by-column and taking into account \eqref{eq:X_L} we obtain \eqref{eq:consistent}.
\end{proof}

\subsection{Practical implementation}\label{sec:practical}
After having analyzed the theoretical properties of \texttt{L-hydra}, we point out some more practical issues in the implementation of \texttt{L-hydra}:

\begin{description}
\item[\textbf{Reduced Eigendecomposition.}] In \eqref{eq:X_L} only the first and the last $d$ Eigenvalues and Eigenvectors of the matrix $A_L$ are needed. There are efficient numerical routines to perform such a reduced Eigendecomposition without computing the full Eigendecomposition of $A_L$. 
\item[\textbf{Optimizing Curvature.}] The \texttt{L-hydra} algorithm treats the curvature parameter $\kappa$ as fixed input. In a practical implementation it is usually desirable to also optimize over $\kappa$ by running \texttt{L-hydra} for several different values of $\kappa$. In a similar way, curvature is optimized in \texttt{HyPy} by treating it as an additional free variable of the stress functional \eqref{eq:stress_func}.
\item[\textbf{Projection to $\cH_d$.}] \texttt{L-hydra} returns points $\hat \xx_1, \dotsc, \hat \xx_{l+m}$ in positive Lorentz space $\Lor$, which are `reasonably close' to $\cH_d$, due to Theorem~\ref{thm:exact}, but typically not exactly located on the hyperboloid $\cH_d$. To obtain points on $\cH_d$, a projection method must be applied, cf. \cite{keller2019hydra}. Here, we project parallel to the $x_1$-axis, i.e. given $\xx \in 
\Lor$ we set 
\[\tilde x_1 := \sqrt{1 + x_2^2 + \dotsm + x_{d+1}^2}\]
to obtain a projected point $\tilde \xx = (\tilde x_1, x_2, \dots, x_{d+1}) \in \cH_d$ for each point returned by \texttt{L-hydra}.
\item[\textbf{Analysis of Runtime.}] Recall that the total number of points, the number of landmarks and the number of non-landmarks are denoted by $n$, $l$ and $m = n - l$ respectively. For Step 1 and 2 of hydra, we expect a runtime of $\cO(l^\alpha)$ with $\alpha$ slightly above, but close, to 2, cf.~\cite{keller2019hydra, hogben2006handbook}, using reduced Eigendecomposition. For Step 3 of hydra, we expect a runtime of $\cO(lm)$ for the multiplication of a $m \times l$ and a $l \times (d+1)$ matrix, resulting in an overall runtime of $\cO(l^{\alpha-1} n)$. On the basis of Theorem~\ref{thm:exact}, the necessary size of the landmark set depends only on the intrinsic dimension $d$, but not on the total number of points to be embedded. This implies, that the runtime of \texttt{L-hydra} scales effectively \emph{linear} as $\cO(n)$ in the number of points to be embedded.
\item[\textbf{The} \texttt{L-hydra+} \textbf{method}.] If minimization of the stress functional \eqref{eq:stress_func} (or, equivalently, relative embedding error \eqref{eq:REE}) is the ultimate goal, then the result of \texttt{L-hydra} can be used as an initial condition for its numerical minimization. Effectively, this corresponds to a chaining of \texttt{L-hydra} and \texttt{HyPy}, where \texttt{L-hydra} replaces the random initial condition of \texttt{HyPy}.
\end{description}

\section{Numerical Results}\label{sec:numerics}
We test the practical performance of \texttt{L-hydra} and \texttt{L-hydra+} on five social networks from SNAP (\cite{yang2012defining}), with $\sim$300,000 up to almost 4 million nodes and $\sim$1 million up to $\sim$117 million edges; see Table 1 for a brief description of the networks. The same networks were used in \cite{chowdhary2017improved} to evaluate the performance of \texttt{HyPy}, which serves as a benchmark for our methods. All networks are unweighted and undirected. The $l = 100$ landmark nodes are chosen with probability proportional to the node degrees without replacement as proposed by \cite{chowdhary2017improved}, which combines the benefits of random sampling and the selection of highest-degree nodes. All shortest-path distances are computed using SNAP for Python (\cite{leskovec2016SNAP}). \texttt{L-hydra} and \texttt{L-hydra+} were implemented in Python as described in Algorithm~\ref{algo:hydra_lm} and Section~\ref{sec:practical}, i.e., with curvature optimization, reduced Eigendecomposition and projection to $\cH_d$, using parallelized routines for matrix operations. For \texttt{HyPy}, we used the Python code kindly supplied by the authors of \cite{chowdhary2017improved}, using the same parallelization as for \texttt{L-hydra} and \texttt{L-hydra+}. All calculations were performed on a Dual socket Intel server using an Intel Xeon CPU E5-2680 v3 with 12 cores at 2.50GHz and with 64GB RAM.\\

\begin{table}[htpb]
\begin{footnotesize}
\begin{tabular}{@{}lp{0.3\textwidth}lll@{}}
\toprule
Network&Description&Source&\#\,Nodes&\#\,Edges \\
 & & \url{http://snap.stanford.edu/data/...} & & \\
\midrule
Amazon & Network based on the 'Customers Who Bought This Item Also Bought' feature of the Amazon website & \url{.../com-Amazon.html} & $334,863$ & $925,872$ \\
DBLP & Collaboration network of the DBLP computer science bibliography & \url{.../com-DBLP.html} & $317,080$ & $1,049,866$ \\
YouTube & Friendships in the YouTube social network & \url{.../com-Youtube.html} & $1,134,890$ & $2,987,624$ \\
Live Journal & Friendship network of a free online blogging community & \url{.../com-LiveJournal.html} & $3,997,962$ & $34,681,189$ \\
Orkut & Friendships in an online social network & \url{.../com-Orkut.html} & $3,072,441$ & $117,185,083$ \\
\bottomrule
\end{tabular}
\vspace{0.5em}
\caption{Description of networks used for numerical experiments}\label{table:networks}
\end{footnotesize}
\end{table}

The embedding results for all methods are shown in Figures \ref{fig:mse1} and \ref{fig:mse2}. For each method and network, we calculate the network embedding for dimensions $d = 2, 3, \dots , 10$. Following \cite{chowdhary2017improved}, we show the relative embedding error \eqref{eq:REE} in three variations: calculated over all pairs of \emph{landmark} nodes; calculated over all pairs consisting of one landmark and one \emph{non-landmark} node; and over 100,000 randomly chosen pairs of non-landmark nodes (`\emph{validation error}'). Note that due to the size of the networks, calculation of the full embedding error over all node pairs is computationally infeasible, but the validation error should provide a good proxy. Since \texttt{HyPy} starts from a random initial condition, we show the $5\%-$ and $95\%-$quantiles in addition to its average result over 20 runs.\\
The runtimes for embedding into dimension $d=2$ are shown in Figure \ref{fig:times}; results for other dimensions are similar. Note that the runtime can be split into distance calculation -- which is the same for all methods -- and embedding calculation -- which differs from method to method. We summarize our observations as follows:

\begin{description}
\item[\textbf{Consistency with \cite{chowdhary2017improved}.}] The results obtained for \texttt{HyPy} are consistent with the results reported in \cite{chowdhary2017improved}; note that we report REE while  RMSE is reported in \cite{chowdhary2017improved}.
\item[\textbf{Performance of} \texttt{L-hydra}.] As expected, the strain-minimizing algorithm \texttt{L-hydra} typically achieves poorer results in terms of the relative embedding error (REE) than the stress-minimizing algorithms \texttt{L-hydra+} and \texttt{HyPy}.  Compared to the average performance of \texttt{HyPy}, the validation REE of \texttt{L-hydra} is on average (median over all networks and dimensions) $2.17$ times larger. However, the total runtime of \texttt{L-hydra} (the bulk of which is spent on distance calculation) is between 5 and 10 times faster than the average runtime of \texttt{HyPy}. In essence, replacing \texttt{HyPy} by \texttt{L-hydra} trades a moderate increase in embedding error against a substantial decrease in computation time.
\item[\textbf{Performance of} \texttt{L-hydra+}.] The embedding result of \texttt{L-hydra+} -- which combines \texttt{L-hydra} with stress minimization -- is consistently (over all networks, dimensions and error types) better than the average result of \texttt{HyPy}, and in the large majority of cases even better than the $5\%-$quantile of \texttt{HyPy} results. Averaged over all dimensions and networks, the validation REE of \texttt{L-hydra+} is $12\%$ smaller than the validation REE of \texttt{HyPy}. At the same time, \texttt{L-hydra+} is faster than \texttt{HyPy} for all network embeddings except for YouTube, with a relative difference in average runtime from $-52.5\%$ (DBLP network) to $+5.7\%$ (YouTube network). 
\item[\textbf{Stability and Reproducibility.}] Both $\texttt{L-hydra}$ and \texttt{L-hydra+} do not suffer from the noise introduced by the random initial condition of \texttt{HyPy}, which -- in particular for small dimension -- causes a notable variation in embedding quality of \texttt{HyPy} and yields different embedding results upon each rerun.
\item[\textbf{Runtime analysis.}] In panels A, B of Figure~\ref{fig:times} we show regression lines to estimate the exponent $\alpha$ in the conjectured complexity $\mathcal{O}(n^\alpha)$ for distance calculation and embedding. For distance calculation we obtain the estimate $\alpha_\text{dist} \approx 1.31$, showing that due to the sparsity of the networks, shortest path computations are more efficient than the upper bound $\alpha = 2$ obtained from the Floyd-Warshall method. For the embedding step of \texttt{L-hydra+} we estimate $\alpha_\text{L-hydra+} \approx 1.32$, while for \texttt{L-hydra} we obtain $\alpha_\text{L-hydra+} \approx 0.99$, confirming the linear scaling of the method.
\end{description}

\begin{figure}
\centering
\includegraphics[width=0.99\textwidth]{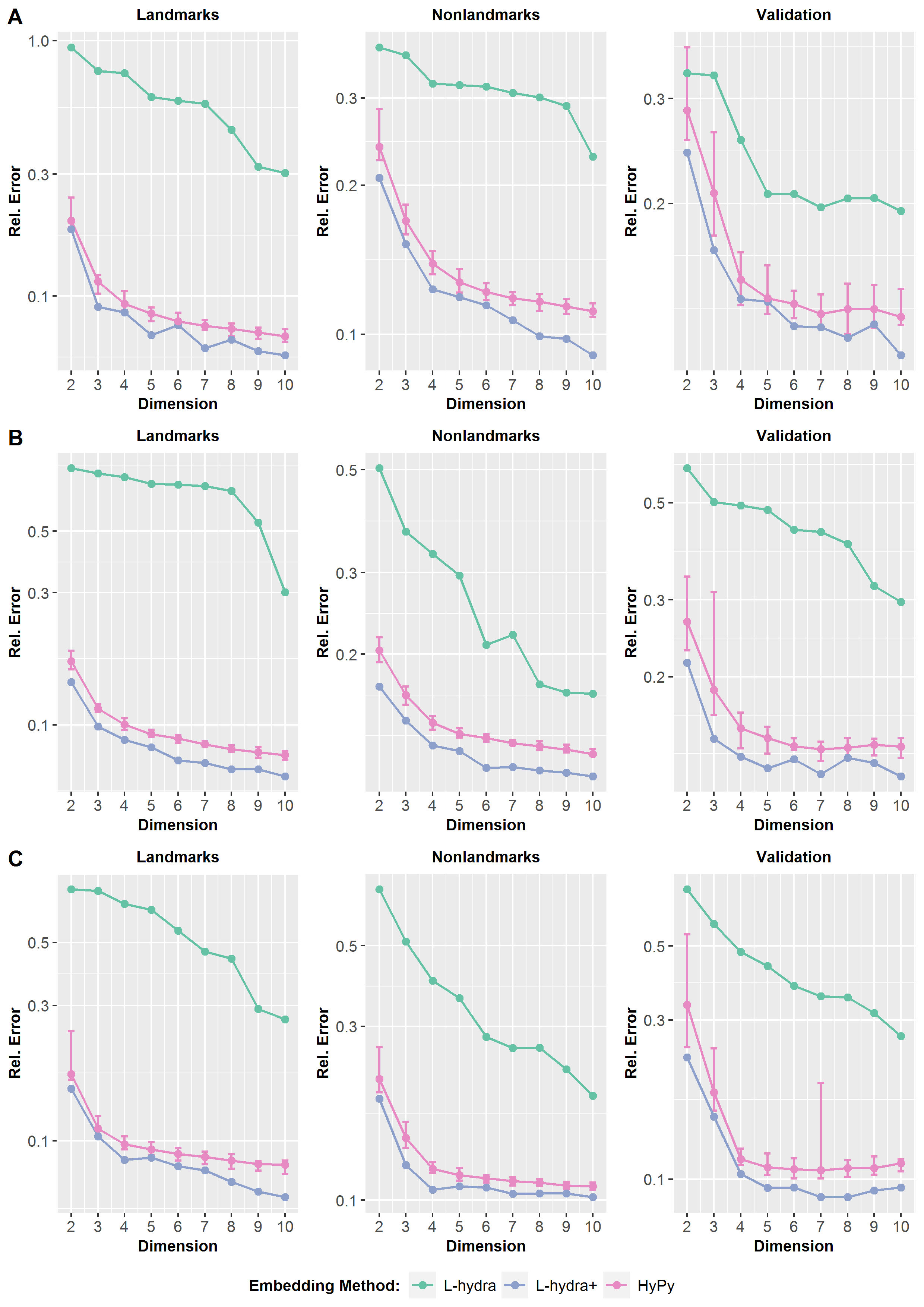}
\caption{Comparison of the embedding performance of \texttt{HyPy}, \texttt{L-hydra} and \texttt{L-hydra+} on the real network data sets (A) Amazon, (B) DBLP and (C) YouTube. Embedding quality is measured by the relative embedding error \eqref{eq:REE} over pairs of landmark nodes (left column), pairs of landmark-non-landmark nodes (middle column) and over 100,000 randomly selected validation pairs of non-landmark nodes (right column). The same $l=100$ landmarks are used for all three methods. For \texttt{L-hydra} a $5 - 95\%$ error bar is shown, corresponding to 20 runs with randomized initial condition.}
\label{fig:mse1}
\end{figure}

\begin{figure}
\centering
\includegraphics[width=0.99\textwidth]{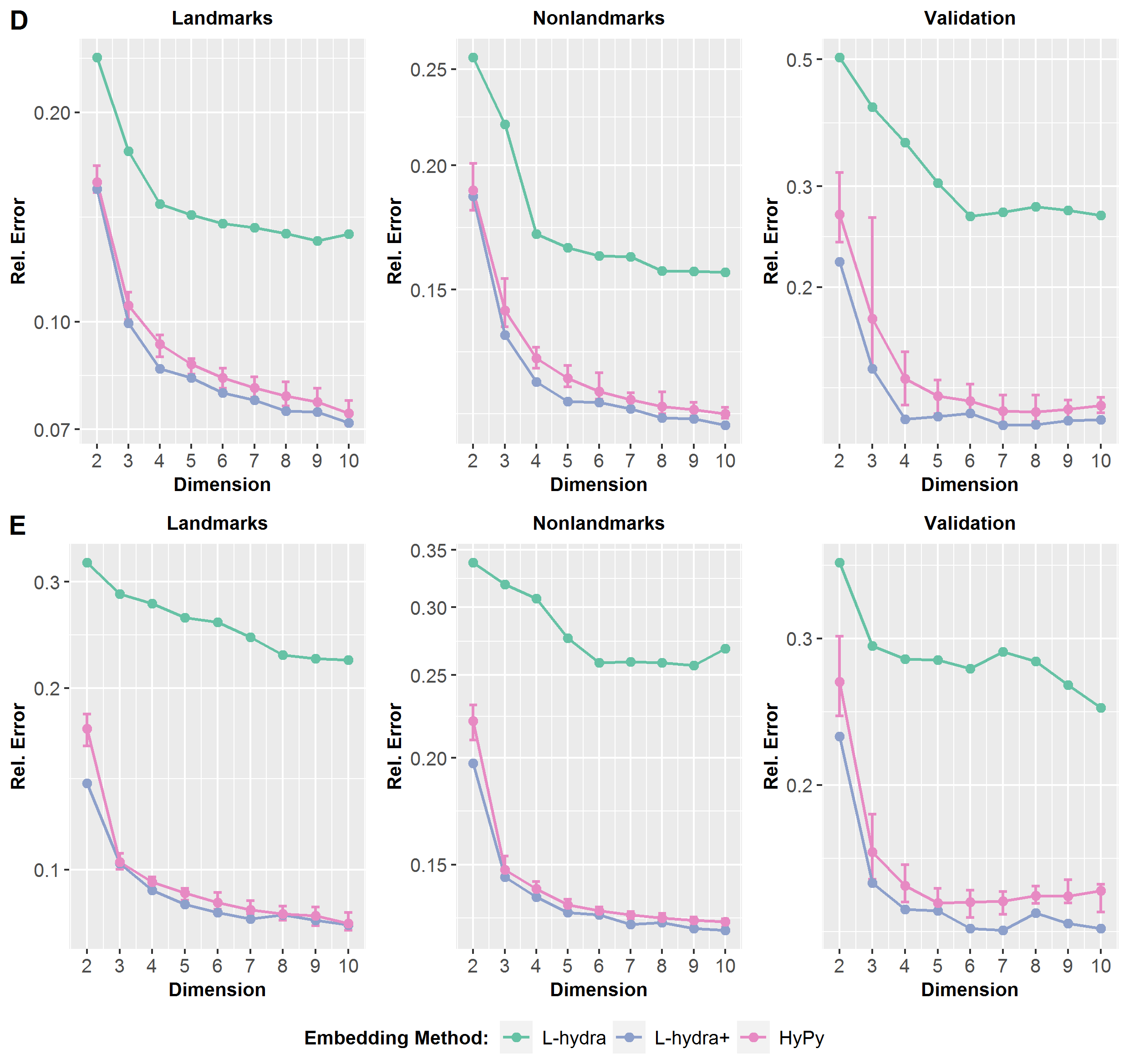}
\caption{Comparison of the embedding performance of \texttt{HyPy}, \texttt{L-hydra} and \texttt{L-hydra+} on the real network data sets (D) LiveJournal and (E) Orkut. Embedding quality is measured by the relative embedding error \eqref{eq:REE} over pairs of landmark nodes (left column), pairs of landmark-non-landmark nodes (middle column) and over 100,000 randomly selected validation pairs of non-landmark nodes (right column). The same $l=100$ landmarks are used for all three methods. For \texttt{L-hydra} a $5 - 95\%$ error bar is shown, corresponding to 20 runs with randomized initial condition.}
\label{fig:mse2}
\end{figure}

\begin{figure}
\centering
\includegraphics[width=0.99\textwidth]{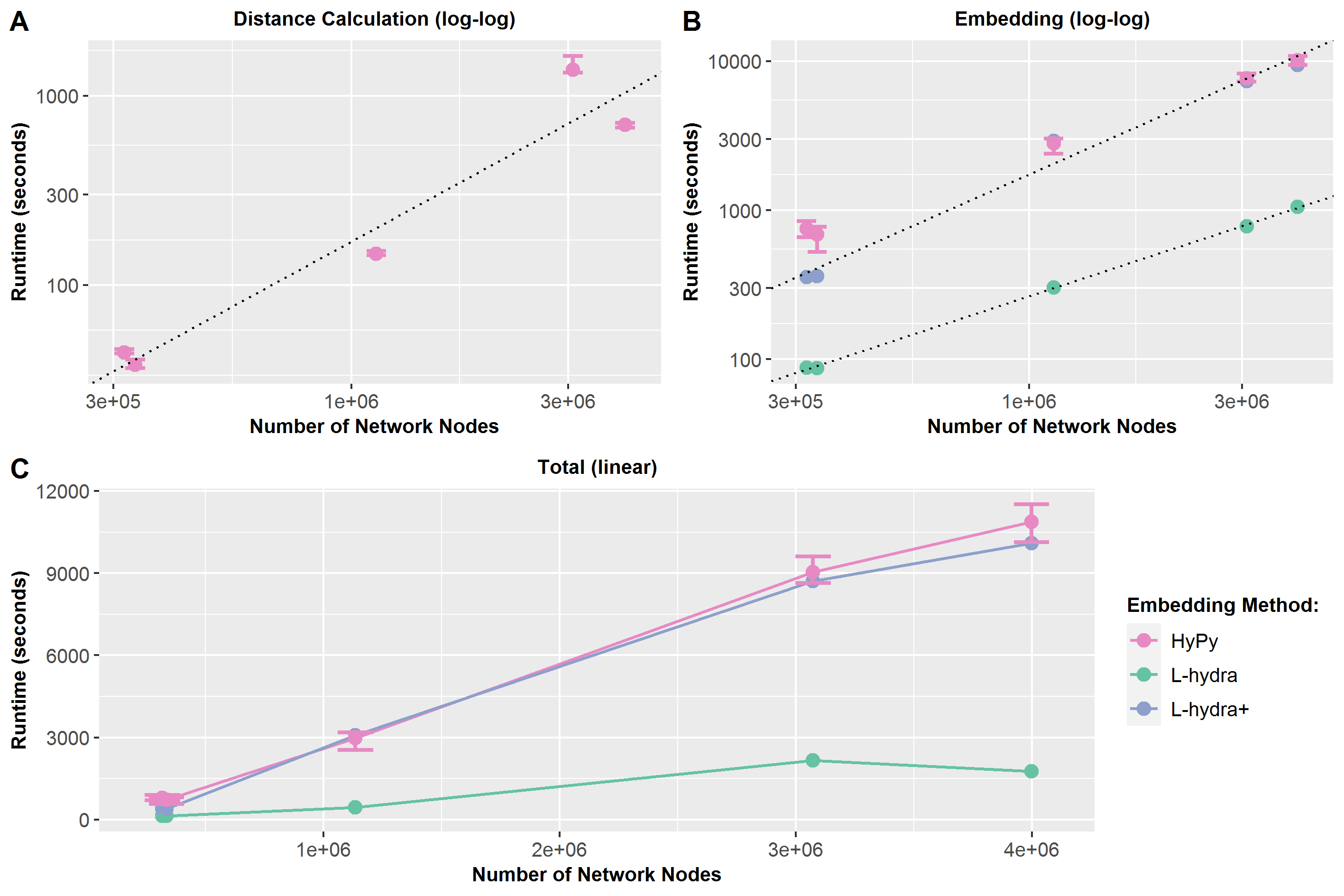}
\caption{Computation time (in seconds) of the hyperbolic embedding methods \texttt{HyPy}, \texttt{L-hydra} and \texttt{L-hydra+} applied to the five networks listed in \ref{table:networks}. For \texttt{HyPy}, average computation time and a $5 - 95\%$ error bar is shown, corresponding to 20 runs with randomized initial condition. Regression lines for \texttt{L-hydra} and \texttt{L-hydra+} are added in the doubly logarithmic plots (A) and (B).}
\label{fig:times}
\end{figure}

%\section{Discussion}

%-------------------------------
%\bibliographystyle{imaiai}
\bibliographystyle{plain}
\bibliography{references}
%-------------------------------
\end{document}